\title{QUANTIZATION OF THE  $G$-CONNECTIONS
VIA THE TANGENT GROUPOID}
\author{Alan Lai
\thanks{Email: alan@caltech.edu -- The research is 
supported under the sponsorship of The Croucher Foundation}
\\
California Institute of Technology
}
\begin{document}
\maketitle

\newtheorem{definition}{Definition}[section]
\newtheorem{example}{Example}[section]
\newtheorem{theorem}{Theorem}[section]
\newtheorem{lemma}[theorem]{Lemma}
\newtheorem{corollary}[theorem]{Corollary}
\newtheorem{proposition}[theorem]{Proposition}
\newtheorem{remark}[definition]{Remark}

\newcommand{\su}{\mathfrak{su}}
\newcommand{\SO}{\operatorname{s}}
\newcommand{\RA}{\operatorname{r}}
\newcommand{\Gg}{\mathcal{G}}
\newcommand{\Tt}{\mathcal{T}}
\newcommand{\Ll}{\mathcal{L}}
\newcommand{\Aa}{\mathcal{A}}
\newcommand{\g}{\mathfrak{g}} 
\newcommand{\KER}{\operatorname{ker}}
\newcommand{\DIFF}{\operatorname{Diff}}
\newcommand{\ENDO}{\operatorname{End}}
\newcommand{\DOM}{\operatorname{Dom}}
\newcommand{\Cc}{\mathbb{C}}
\newcommand{\TR}{\operatorname{Tr}}
\newcommand{\DET}{\operatorname{Det}}
\newcommand{\Tr}{\operatorname{tr}}
\newcommand{\Th}{\operatorname{th}}
\newcommand{\SYM}{\operatorname{Sym}}

\setcounter{section}{-1}

\begin{abstract}
A description of the space of $G$-connections using the tangent groupoid is given.
As the tangent groupoid parameter is away from zero, the $G$-connections act as convolution operators
on a Hilbert space.
The gauge action is examined in the tangent groupoid description of the $G$-connections.
Tetrads are formulated as Dirac type operators. The connection variables and tetrad variables in Ashtekar's gravity are presented as operators on a Hilbert space.
\end{abstract}

\setcounter{tocdepth}{2}


\section{Introduction}
In Ashtekar's theory of gravity, a $SU(2)$-connection captures the extrinsic curvature of a space-like leaf
in a time-transversal foliation, and the intrinsic geometry of the leaf is given by a tetrad \cite{ashtekar}.
It is shown that the Einstein-Hilbert functional and Einstein equations can be written
in terms of  $SU(2)$-connections and tetrads, thus
these variables together  recast Einstein's theory of gravity.
By rewriting Einstein's gravity with the connection variables and tetrad variables, 
one could attempt to quantize gravity via the Hamiltonian formalism, and obtain a theory of quantum gravity \cite{ashham}.
The reader may refer to Thiemann's introductory \cite{LQG}.
This article offers an alternative view of the space of connections to ones that appear in other loop quantum gravity literatures.

To take the connections as dynamic variables in a quantum theory, one studies wave functions on the space of connections. Unfortunately, the lack of a measure on such a space already poses the first challenge.
One typical solution to obtain a measure on the connection space comes from spaces of progressively 
refined cylindrical functions. In another description, one uses  a finite set of curves to probe the space of 
$G$-connections to obtain a finite dimensional manifold  that depend on the sets of curves. By successively refining 
the finite sets of curves, one obtains a pro-manifold that extends
the  original space of connections to the so-called space of generalized connections \cite{baez}.

As a step to quantizing gravity in the Ashtekar framework, there are recent developments of describing such an extended space of connections using a spectral triple in noncommutative geometry \cite{AGN,lai}, which captures the geometry
of the space of generalized connections as operators on a Hilbert space.
While the geometries of the base manifold and the  space of $G$-connections  on it are in theory retained, the construction of  the spectral triple is considered too discrete to practically allow one to reconstruct the
base manifold and its $G$-connections.
To rid the discrete description of using finite sets of embedded curves, this article proposes an alternative way to smoothly probe the space of connections using the tangent groupoid.
The purposes of this article are to present an alternative idea to the studies of loop quantum gravity
and to generate discussions on it, therefore the content presented here will be incomplete.

This article contains four sections, they are arranged as follows:
Section 1 introduces a smooth groupoid that we use to probe the space of connections, the
tangent groupoid. We will define our q-connections as $G$-valued functions on a tangent groupoid.
In Section 2, we discuss the gauge action on the space of q-connections, and in Section 3, we argue that the connection variables and tetrad variables are quantized as operators, and present a gauge invariant function that mimics the loop variables in loop quantum gravity.

The last section is an outlook that summarizes the article, and suggests further developments in this direction.


\section{Tangent Groupoid}
In loop quantum gravity, one studies the space of connections via cylindrical functions.
These functions are obtained from probing the space of connections with  finite sets of embedded curves.
For instance, the edges of a triangulation of a compact manifold gives a finite set of embedded curves.
Given a $G$-connection, one could compute the holonomy along a given curve. 
And by assuming the $G$-bundle over the manifold is trivialized,  one obtains a  $G$-value from it.
When one evaluates the holonomy along that fixed curve of the entire space of $G$-connections, one
obtains the full group $G$. 
The mental picture is that we use an one-dimensional object to probe the connection space, and obtain a copy of $G$. Thus, there is a copy of $G$ associated to  each embedded curve.
As one only considers a finite set of curves, this coarse-grain
approximation of the connection space via holonomies is given by a finite product of $G$, which is a
finite dimensional manifold with a unique invariant measure.
By successively refining the set of finite curves, say by barycentric sub-dividing the triangulation, one
can approximate the connection space arbitrarily well, and this procedure results in a pro-manifold that
the connection space densely embeds into.
However, the finite sets of curves along with their refinements are too discrete to allow one to 
reconstruct the smooth structure of the original connection space.
In this section, we propose using a smooth groupoid to probe the connect space with, and the holonomies will be encoded by the smooth $G$-valued functions over the smooth groupoid.
\begin{definition}
\label{liegroupoid}
A {\bf Lie groupoid} is a groupoid
$\Gg  \rightrightarrows ^{\hspace{-0.25cm}^{\RA}}
_{\hspace{-0.25cm}_{\SO}}X$ with manifold structures on $\Gg$ and $X$ such that $\SO, \RA$ are 
submersions, the inclusion of $X$ in $\Gg$ as identity homophisms and the composition $\Gg \times \Gg \to \Gg$ are smooth.
\end{definition}

\begin{example}[\cite{tangent}]
\label{liegroupoidex}\mbox{ }
\begin{enumerate}
\item
The tangent bundle $T\Sigma$ of a manifold $\Sigma$ forms a Lie groupoid $T\Sigma  \rightrightarrows ^{\hspace{-0.25cm}^{\RA}}
_{\hspace{-0.25cm}_{\SO}}\Sigma$ with the \emph{source} and \emph{range} maps 
$\SO, \RA: T\Sigma\to \Sigma$ given by 
$\SO(x,V_x) = x = \RA(V_x,x)$ for $(x,V_x) \in T\Sigma$, the inclusion $\Sigma \hookrightarrow T\Sigma$ given by
the zero section, and composition $T\Sigma\times T\Sigma \to T\Sigma$ given by $(x,V_x )\times (x,W_x) \mapsto 
(x,V_x + W_x)$.
\item
The product $\Sigma\times \Sigma$ forms a Lie groupoid $\Sigma \times \Sigma
  \rightrightarrows ^{\hspace{-0.25cm}^{\RA}}
_{\hspace{-0.25cm}_{\SO}} \Sigma $ with the \emph{source} and \emph{range} maps 
$\SO, \RA: \Sigma \times \Sigma \to \Sigma$ given by 
$\SO(x,y ) = x $, $ \RA(x,y)=y$, the inclusion $\Sigma \hookrightarrow \Sigma\times \Sigma$ is given by the 
diagonal embedding, and the composition 
$(\Sigma\times \Sigma) \times (\Sigma\times \Sigma) \to \Sigma \times \Sigma$ is given by
$(x,y) \times (y,z) \mapsto (x,z)$.

\end{enumerate}

\end{example}
These two examples are the only Lie groupoids this article will deal with. 
Now, let $\Sigma$ be a compact oriented manifold.

\begin{definition}[\cite{ncg}]
\label{tangentgroupoid}
The {\bf tangent groupoid} $\Tt \Sigma$ of a manifold $\Sigma$ is 
$\Tt \Sigma= T\Sigma \times \{ 0 \} \bigsqcup \Sigma \times \Sigma \times
(0,1]$ as a set.
And the groupoids 
 $T\Sigma $ and $ \Sigma \times \Sigma $ are glued together such that
for
$
p(0) =  
(x,V_x,0) \in T\Sigma \times \{0\}$,
then $p(\hbar)= \left(x,\exp \hbar V_x, \hbar\right)\in \Sigma \times \Sigma \times (0,1]$.
\end{definition}

We think of an element $(x,y,\hbar)$ of $\Sigma\times \Sigma \times (0,1]$ as
a geodesic starting at $x$ and ending at $y$, and $\hbar$ is the time it takes to travel from $x$ to $y$ 
in a given velocity. Hence, it is considered a one dimensional object.

Let $G$ denote a compact Lie group and $\g$ its Lie algebra.
Denote by $\Tt _e G $ the set of elements of $\Tt G$ with source $e$.
$\Tt_e G$ is nothing but $\g\times \{0\} \bigsqcup G\times (0,1]$ as a set, and $\g $ and $G$ are glued together by the exponential map.
Denote by $\Aa_\hbar$ the space of smooth functions from $\Sigma \times \Sigma$ to $G$ and 
$\Aa_0$ the space of smooth functions from $T\Sigma$ to $\g$ such that $A_0(x,\cdot):
T_x \Sigma \to \g$ is linear for $A_0 \in \Aa_0$ and each $x\in \Sigma $.\\

Here we think of an element $A_\hbar$ of $\Aa_\hbar$ as a holonomy presentation of a connection
for paths described by $\Sigma\times \Sigma \times (0,1]$.

\begin{definition}
\label{qconnection}
Define the space of {\bf q-connections} $F(\Tt \Sigma, \Tt_e G)$ to be
$\Aa_0 \times \{0\} \bigsqcup \Aa_\hbar \times (0,1]$ as a set.
And $\Aa_0$ and $\Aa_\hbar$ are glued together as
$A_\hbar (x, \exp \hbar V_x ) = \exp (\hbar A_0(x,V_x))$ for all $\hbar \in [0,1]$.
\end{definition}

\begin{remark}
The glueing condition of $F(\Tt \Sigma, \Tt_e G)$ implies that $A_h (x,y) \cdot A_h (y,x) \longrightarrow 
^{\hspace{-0.6cm}\hbar \to 0 } I_G$ and $A_h(x,x) \longrightarrow 
^{\hspace{-0.6cm}\hbar \to 0 } I_G$, where $I_G$ is the identity of $G$.
\end{remark}

\begin{remark}
Each $A_0\in \Aa_0$ gives rise to a $\g$-valued 1-form in a unique way. Hence,
$\Aa_0$ is naturally identified as the space of $G$-connections.
\end{remark}

$\Aa_\hbar$ forms a group under point-wise multiplication of $G$, and 
$\Aa_0$ forms a group under point-wise addition of $\g$.

\begin{proposition}
The product $F(\Tt \Sigma, \Tt_e G)$  inherits  from $\Aa_0$ and $\Aa_\hbar$ is smooth.
\end{proposition}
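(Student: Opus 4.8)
The plan is to exhibit the product on $F(\Tt\Sigma,\Tt_e G)$ as the operation induced point-wise over $\Tt\Sigma$ by the fibrewise multiplication of $\Tt_e G$, and then to deduce its smoothness from the smoothness of that multiplication together with the fact that composition of smooth maps is smooth.

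First I would make the structures explicit. As a subspace of the tangent groupoid $\Tt G$, the set $\Tt_e G=\g\times\{0\}\sqcup G\times(0,1]$ inherits a smooth manifold structure whose charts near $\hbar=0$ are $(\xi,\hbar)\mapsto(\exp\hbar\xi,\hbar)$ for $\hbar>0$ together with $(\xi,0)$; equipping each fibre over $(0,1]$ with the group operation of $G$ and the fibre over $0$ with addition in $\g$ makes $\Tt_e G$ a bundle of groups over $[0,1]$. Its fibrewise multiplication $m\colon\Tt_e G\times_{[0,1]}\Tt_e G\to\Tt_e G$ is smooth: in the above chart it reads $m\bigl((\xi,\hbar),(\eta,\hbar)\bigr)=\bigl(\mu_\hbar(\xi,\eta),\hbar\bigr)$ with $\mu_\hbar(\xi,\eta)=\tfrac1\hbar\operatorname{BCH}(\hbar\xi,\hbar\eta)=\xi+\eta+\tfrac\hbar2[\xi,\eta]+O(\hbar^{2})$, which is smooth in $(\xi,\eta,\hbar)$ near $\hbar=0$ and reduces to $\mu_0(\xi,\eta)=\xi+\eta$ at $\hbar=0$; this is the same rescaled Baker--Campbell--Hausdorff computation that underlies the smoothness of the tangent groupoid \cite{ncg}. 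By Definition \ref{qconnection} a q-connection $A$ is precisely a smooth map $\Tt\Sigma\to\Tt_e G$ over the parameter $[0,1]$, fibrewise linear at $\hbar=0$ --- namely $A_\hbar\in\Aa_\hbar$ for $\hbar>0$ and $A_0\in\Aa_0$ --- the glueing $A_\hbar(x,\exp\hbar V_x)=\exp(\hbar A_0(x,V_x))$ being exactly the condition that $A$ be smooth across $\hbar=0$ in the charts above. The inherited product is $A\cdot B:=m\circ(A,B)$, which restricts to point-wise multiplication in $G$ for $\hbar>0$ and to point-wise addition in $\g$ at $\hbar=0$, i.e.\ to the group operations that $\Aa_\hbar$ and $\Aa_0$ already carry.

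With this description the argument is short. For each fixed $\hbar>0$, point-wise multiplication $\Aa_\hbar\times\Aa_\hbar\to\Aa_\hbar$ is smooth in the calculus of mapping spaces and addition on the Fréchet space $\Aa_0$ is smooth, so the only substantive point is joint smoothness through $\hbar=0$; but $m$ is smooth on all of $\Tt_e G$ and composition of smooth maps is smooth, so $(A,B)\mapsto m\circ(A,B)$ carries $F\times F$ smoothly into $F$. Explicitly near $\hbar=0$, $(A_\hbar\cdot B_\hbar)(x,\exp\hbar V_x)=\exp(\hbar A_0(x,V_x))\exp(\hbar B_0(x,V_x))=\exp\!\bigl(\hbar\,\mu_\hbar(A_0(x,V_x),B_0(x,V_x))\bigr)$, so the glueing data of $A\cdot B$ is the smooth function $(x,V_x,\hbar)\mapsto\mu_\hbar(A_0(x,V_x),B_0(x,V_x))$, which equals $A_0(x,V_x)+B_0(x,V_x)$ at $\hbar=0$.

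The step I expect to be the main obstacle is precisely this behaviour at $\hbar=0$: one must confirm that the piecewise product is compatible with the glueing defining $F$, and here the Baker--Campbell--Hausdorff corrections $\tfrac\hbar2[\,\cdot,\cdot\,]+\cdots$ appear. They do not spoil smoothness --- they are of higher order in $\hbar$ and vanish in the limit --- but they mean that $A\cdot B$ satisfies the glueing only asymptotically as $\hbar\to0$ (in the sense of the Remark following Definition \ref{qconnection}) and not as a strict identity at each $\hbar>0$; I would therefore pin down the precise meaning of the glueing condition before asserting $A\cdot B\in F$. All the remaining ingredients --- smoothness of point-wise operations on mapping spaces, smoothness of composition, smoothness of $\exp$ and of $\log$ near $e$ on the compact group $G$ --- are standard.
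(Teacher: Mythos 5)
Your proof is correct and, once the Baker--Campbell--Hausdorff scaffolding is stripped away, it is essentially the paper's argument: the paper's entire proof is the one-line computation $\left.\frac{d}{d\hbar}(A_\hbar\cdot A'_\hbar)(x,\exp\hbar V_x)\right\rvert_{\hbar=0}=(A_0+A'_0)(x,V_x)$, which is exactly your expansion $\exp(\hbar A_0)\exp(\hbar B_0)=\exp\bigl(\hbar\,\mu_\hbar(A_0,B_0)\bigr)$ with $\mu_0(\xi,\eta)=\xi+\eta$. Your closing caveat is well taken --- the product satisfies the glueing of Definition~\ref{qconnection} only to first order in $\hbar$ rather than as a strict identity for each $\hbar>0$, and the paper's proof tacitly adopts that weaker, derivative-at-zero reading of the glueing condition.
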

\begin{proof}
The proof follows from
$\left. \frac{d}{d\hbar}(A_\hbar \cdot A'_\hbar) ( x, \exp \hbar V_x) \right\rvert_{\hbar=0}  =  (A_0+A'_0 )(x,V_x) .$
\end{proof}


\section{Gauge Action}
The q-connection space $F(\Tt \Sigma, \Tt_e G)$ is a package that
captures information about probing the $G$-connection space with the tangent groupoid.
That is, an element $A_\hbar$ of $\Aa_\hbar$ evaluated at $(x,y)\in \Sigma \times \Sigma$
gives the holonomy of a connection along the geodesic from $x$ to $y$.
This holonomy presentation has a natural gauge action given by
applying a symmetry at the starting point $x$, then evaluate the holonomy along the geodesic 
from $x$ to $y$, and finally apply a reverse symmetry at $y$.
We make it formal by the following definition.

Denote by  $C^\infty(\Sigma,G)$ the set of smooth functions from $\Sigma$ to $G$.
Define the {\bf gauge action} of $C^\infty(\Sigma,G)$  on $\Aa_\hbar$ by
\begin{equation}
\label{eqn:qgaugeact}
( g\cdot A_\hbar)(x,y) := g(x) A_\hbar(x,y) g^{-1}(y)
\end{equation}
for $g\in C^\infty(\Sigma,G)$ and $A_\hbar \in \Aa_\hbar$.
And define the gauge action of  $C^\infty(\Sigma,G)$ on $\Aa_0$ by
\begin{equation}
\label{eqn:cgaugeact}
(g\cdot A_0)(x,V_x):= g(x) A_0(x,V_x) g^{-1}(x) + g(x) \langle dg^{-1}(x),V_x\rangle
\end{equation}
for $g\in C^\infty(\Sigma,G)$ and $A_0 \in \Aa_0$.
\begin{remark}
Equation~\eqref{eqn:cgaugeact} is the usual gauge action on $G$-connections.
\end{remark}

The following proposition shows that the gauge actions defined on $\Aa_\hbar $ and $\Aa_0$ are compatible.
\begin{proposition}
The  
$C^\infty(\Sigma,G)$ action  on 
the q-connection space $F(\Tt \Sigma, \Tt_e G)$ induced from Equations~\eqref{eqn:qgaugeact},\eqref{eqn:cgaugeact} is smooth.
\end{proposition}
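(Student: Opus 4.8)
The plan is to reduce the statement, just as in the proof of smoothness of the product on $F(\Tt\Sigma,\Tt_e G)$, to a single first-order identity at $\hbar=0$. For $\hbar\in(0,1]$ the formula \eqref{eqn:qgaugeact} exhibits $g\cdot A_\hbar$ as a composition of the smooth maps $g$, $g^{-1}$, $A_\hbar$ and the multiplication of $G$, so the induced action on $\Aa_\hbar\times(0,1]$ is manifestly smooth in $(g,A_\hbar,\hbar)$; likewise \eqref{eqn:cgaugeact} is smooth on $\Aa_0\times\{0\}$, since $g$ and $dg$ enter smoothly. Hence the only thing left is to check that the two prescriptions glue compatibly across $\hbar=0$: that if $(A_0,0)\sqcup(A_\hbar,\hbar)$ lies in $F(\Tt\Sigma,\Tt_e G)$ then so does its gauge transform, i.e.\ that $(g\cdot A_\hbar)(x,\exp\hbar V_x)$ and $\exp\!\big(\hbar\,(g\cdot A_0)(x,V_x)\big)$ agree to first order in $\hbar$ at $\hbar=0$ for every $x\in\Sigma$ and $V_x\in T_x\Sigma$.

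To carry this out I would fix $g\in C^\infty(\Sigma,G)$, a q-connection with slices $A_0,A_\hbar$, and $(x,V_x)\in T\Sigma$, and consider the $G$-valued curve
\[
c(\hbar):=(g\cdot A_\hbar)(x,\exp\hbar V_x)=g(x)\,\exp\!\big(\hbar A_0(x,V_x)\big)\,g^{-1}(\exp\hbar V_x),
\]
where the middle factor has been rewritten via the glueing condition of Definition~\ref{qconnection}. Since the base point $\exp\hbar V_x$ equals $x$ at $\hbar=0$, one gets $c(0)=g(x)\,e\,g(x)^{-1}=e$, so $c$ passes through the identity and $c'(0)\in\g$. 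Differentiating the product of the three curves at $\hbar=0$ by the Leibniz rule in $G$ and using $dL_{g(x)}\circ dR_{g(x)^{-1}}=\operatorname{Ad}(g(x))$ on $\g$, I expect
\[
c'(0)=\operatorname{Ad}(g(x))\,A_0(x,V_x)+g(x)\,\langle dg^{-1}(x),V_x\rangle,
\]
which is exactly $(g\cdot A_0)(x,V_x)$ of \eqref{eqn:cgaugeact}. So the gauge transform again satisfies the glueing relation and is an element of $F(\Tt\Sigma,\Tt_e G)$.

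To upgrade this to genuine smoothness at the singular value $\hbar=0$ I would work in the standard chart on $\Tt\Sigma$ near $\hbar=0$ and verify that
\[
(\hbar,x,V_x)\ \longmapsto\ \tfrac{1}{\hbar}\,\log\Big(g(x)\,\exp\!\big(\hbar A_0(x,V_x)\big)\,g^{-1}(\exp\hbar V_x)\Big)
\]
extends smoothly down to $\hbar=0$: for $\hbar$ small the argument of $\log$ stays in a fixed neighbourhood of $e$ and depends smoothly on $(\hbar,x,V_x)$ and on $g$, so composing with $\log$ yields a smooth $\g$-valued function that vanishes identically at $\hbar=0$, whence Hadamard's lemma makes the $\tfrac{1}{\hbar}$-rescaled function smooth, with value $(g\cdot A_0)(x,V_x)$ at $\hbar=0$ by the computation above. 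Together with smoothness of \eqref{eqn:qgaugeact} for $\hbar>0$, this gives smoothness of the $C^\infty(\Sigma,G)$-action on $F(\Tt\Sigma,\Tt_e G)$.

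The first-order algebra is routine; the step I would regard as the main obstacle is the last one, namely pinning down precisely which smooth structure is in force at $\hbar=0$ --- the exponential chart on the tangent groupoid, extended fibrewise through $\exp\colon\g\to G$ --- and checking that the $\tfrac{1}{\hbar}$-rescaled holonomy extends there \emph{smoothly}, jointly in $(\hbar,x,V_x)$ and in the gauge parameter $g$, rather than merely continuously. Once that framework is fixed the verification is a direct application of Hadamard's lemma to the explicit product above, with the infinitesimal computation of the previous paragraph supplying the boundary value at $\hbar=0$.
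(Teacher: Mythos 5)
Your proposal is correct and its core is the same as the paper's proof: the paper verifies exactly the first-order identity $\left.\frac{d}{d\hbar}(g\cdot A_\hbar)(x,\exp\hbar V_x)\right\rvert_{\hbar=0}=(g\cdot A_0)(x,V_x)$ by the Leibniz rule, which is your computation of $c'(0)$. The additional material you supply (manifest smoothness for $\hbar>0$, and the Hadamard's-lemma argument for smooth extension in the exponential chart at $\hbar=0$) goes beyond what the paper writes down, but is consistent with it and only strengthens the argument.
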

\begin{proof}
Suppose that $A_\hbar (x,\exp \hbar V_x) = \exp \hbar A_0 (x,V_x)$.
Then 
\begin{eqnarray*}
\left. \frac{d}{d\hbar} (g\cdot A_\hbar)(x,\exp\hbar V_x) \right\rvert _{\hbar=0}
&=&
 \left. \frac{d}{d\hbar} 
 \left( g(x) A_\hbar (x,\exp \hbar V_x) g^{-1} (\exp \hbar V_x)\right) \right\rvert _{\hbar=0}\\
 &=&
 \left. g(x)\left( \frac{d }{d\hbar} A_\hbar (x,\exp \hbar V_x) \right)g^{-1} (\exp \hbar V_x) \right\rvert _{\hbar=0}
 \\&&+
 \left. g(x) A_\hbar(x, \exp \hbar V_x )\left( \frac{d }{d\hbar}g^{-1}(\exp \hbar V_x)  \right)\right\rvert_{\hbar=0}\\
 &=&
 g(x) A_0 (x,V_x)g^{-1}(x) + g(x)\langle d g^{-1} (x), V_x\rangle\\
 &=& (g\cdot A_0 )(x, V_x) .
\end{eqnarray*}
The proof is complete.
\end{proof}

Let $\DIFF(\Sigma)$ denote the diffeomorphism group of $\Sigma$.
 $\Tt \Sigma=T\Sigma \times \{ 0 \} \bigsqcup \Sigma \times \Sigma \times
(0,1]$ carries a smooth $\DIFF(\Sigma)$ action given by
\[
\begin{array}{rl}
(x,y)\mapsto \left(\sigma(x),\sigma(y) \right) & \mbox{ for } (x,y) \in \Sigma \times \Sigma 
\mbox{ and } \sigma \in \DIFF(\Sigma) ,\\
(x,V_x\mapsto  (\sigma(x), d\sigma_{\sigma(x)} V_x ) & \mbox{ for } (x,V_x) \in T\Sigma 
\mbox{ and } \sigma \in \DIFF(\Sigma) .
\end{array}
\]
Subsequently, $\DIFF(\Sigma)$ acts on the q-connection space $F(\Tt \Sigma, \Tt_e G)$ smoothly via the induced action
\[
(\sigma \cdot A_\hbar ) (x,y) = A_\hbar  \left(\sigma(x), \sigma(y) \right)
\mbox{ and }
(\sigma \cdot A_0 )(x, V_x) = A_0 \left(\sigma(x), d\sigma_{\sigma(x)} V_x\right),
\]
for $\sigma \in \DIFF(\Sigma)$.

\section{Loop Quantum Gravity}
Suppose that $G $ unitarily represents on some finite dimensional vector space, say without loss
of generality $\Cc^N$. Then $G$
 is included in the  matrix algebra $M_N(\Cc)$ as unitary matrices.
 Let us fix  a Riemannian metric $q$ on $\Sigma$.
 Then one obtains the Hilbert space 
 $L^2(\Sigma, \Cc^N)$ that $\Aa_\hbar$ acts on by convolution
 \[
 (A_\hbar * \varphi)(y)= \int _\Sigma A_\hbar (x,y) \cdot \varphi (x) dx ,
 \]
 where $A_\hbar \in \Aa_\hbar$, $\varphi \in L^2(\Sigma,\Cc^N)$, the dot $\cdot$ is the
 unitary representation of $G$, and the integration is with respect to
 the volume form induced from the metric $q$.
 
Denote by ${\Aa_\hbar^\#}$ the space of $M_N(\Cc)$-valued smooth functions on $\Sigma \times \Sigma$, thus 
${\Aa_\hbar^\#} \supset \Aa_\hbar$ and it acts on $L^2(\Sigma,\Cc^N)$.
Elements of ${\Aa_\hbar^\#}$ will again be denoted by $A_\hbar$.
${\Aa_\hbar^\#}$ comes equipped with an involution  given by the point-wise conjugate transpose of $M_N(\Cc)$.
The action of ${\Aa_\hbar^\#}$ on $L^2(\Sigma,\Cc^N)$ gives rise to 
a noncommutative product on
${\Aa_\hbar^\#} $
given by the convolution 
\[
(A_\hbar * A'_\hbar)(x, z)= \int _\Sigma A_\hbar ( x,y) \cdot A'_\hbar (y,z) dy ,
\]
for $A_\hbar, A'_\hbar \in {\Aa_\hbar^\#}$, and the integration is with respect to the volume form of $q$.

The space ${\Aa_\hbar^\#}$ forms a $*$-algebra, 
and it is identified with the ideal of trace-class operators 
 on the Hilbert space $L^2(\Sigma, \Cc^N)$.

Let $\TR$ denote the operator trace on $L^2(\Sigma,\Cc^N)$,
 and $\TR:{\Aa_\hbar^\#} \to \Cc$ is explicitly given by
\[
\TR (A_\hbar) = \int _M \Tr A_\hbar (x,x) dx ,
\]
where $\Tr$ is the matrix trace of $M_N(\Cc)$.

The linear functional $\TR: {\Aa_\hbar^\#} \to \Cc$ is invariant under the gauge action of $C^\infty(\Sigma,G)$,
as
\begin{eqnarray*}
\TR ( g \cdot A_\hbar) &=& \int _M \Tr \left( g(x) A_\hbar(x,x) g^{-1}(x) \right) dx \\
&=&  \int _M \Tr \left( A_\hbar(x,x) \right) dx \\
&=& \TR(A_\hbar) .
\end{eqnarray*}
Such a property is called {\bf gauge invariant}. 
The group element $A_\hbar(x,x)$ represents the holonomy of a connection around a loop with
base point $x$. The functional $\TR:{\Aa_\hbar^\#} \to \Cc$ being gauge invariant
is parallel to the fact that loop variables being gauge invariant in loop quantum gravity.

Let us move our attention to the metric $q$ now.
The Riemannian metric  tensor $q$ on $\Sigma$ is positive definite matrix at each point $x\in \Sigma$, it can be decomposed  as $q(x)=e(x) e^*(x)$ at each $x$ for some matrix valued smooth function
$e$ over $\Sigma$. The set of column vectors $\{e_i(x)\}$ of $e(x)$ can be thought of as an 
eigenbasis, it
 forms a frame at each $x$, and each $e_i$ is a tangent vector field of $\Sigma$. These fields of vectors are called a {\bf tetrad}.
We call the tetrad {\bf oriented} if  the point-wise determinant $\det{e(x)} >0$ for all 
$x\in \Sigma$.
A tetrad defines a metric tensor uniquely by $e e^*$, however a metric $q$ tensor could have many different decompositions $q=e e^*$.
The metric given by an oriented tetrad exhibits an $SO(d)$ symmetry, 
where $d$ is the dimension of $\Sigma$.
To wit, suppose $e'  = e T $, where $T$ is a smooth field of orientation preserving rotational matrices $SO(d)$ over $\Sigma$. Then 
\[
e' {e'} ^{*} = ( e T ) (T^* e^* ) = ee^* =q ,
\]
as $T T^* = I$.

Suppose now that $\Sigma$ is 3-dimensional and
the Lie group $G$ is $SU(2)$, so
 $\Sigma$ is a spin manifold, the symmetry group of the tetrads can be lifted from $SO(3)$ to $SU(2)=Spin(3)$.
The Lie algebra $\g=\su(2)$ of $SU(2)$ is generated by 
\[
u_1= \sqrt{\frac{-1}{2}}
\left(
\begin{array}{cc}
0& 1 \\
1 & 0 
\end{array}
\right),
u_2=
 \sqrt{\frac{-1}{2}}
\left(
\begin{array}{cc}
0& \sqrt{ -1} \\
-\sqrt{-1} & 0 
\end{array}
\right), \mbox{ and }
u_3=
 \sqrt{\frac{-1}{2}}
\left(
\begin{array}{cc}
1 &0 \\
0& -1
\end{array}
\right).
\]
The elements $\{u_i\}$ satisfy the relation  $u_i u_j + u_j u_i = -\delta_{ij}$.

Define the operator $D_q$ acting on $L^2(\Sigma, \Cc^2)$ by 
\[
D_q= u_1 e_1 + u_2 e_2 + u_3 e_3,
\]
where $q$ is a Riemannian metric on $\Sigma$ with decomposition $q=e e^*$, and
each $e_i$ is a column vectors of $e$.
The operator $D_q$  depends on the metric $q$, and is a Dirac type operator.
Notice that there is an ordering used in the definition.

The interaction between a connection and a tetrad is formulated as the commutator
$[D_q, A_\hbar]$ for $A_\hbar \in \Aa_\hbar ^\#$.

Let us compute the commutator explicitly on a Hilbert space vector $\phi \in L^2(\Sigma,\Cc^2)$:
\begin{eqnarray}
[D_q,A_\hbar ] \varphi (y)& =& 
\sum_{i=1}^3  \left( \int_\Sigma u_i \langle d_y A_\hbar (x,y),e_i(y) \rangle \varphi(x) dx
-
\int_\Sigma A_\hbar(x,y) u_i \langle d_x \varphi(x),e_i(x)\rangle dx
\right)\\
&=& \label{eqn:four}
\sum_{i=1}^3  \left( \int_\Sigma u_i \langle d_y A_\hbar (x,y),e_i(y) \rangle \varphi(x) dx
+
\int_\Sigma \langle d_x A_\hbar(x,y) , e_i(x) \rangle u_i \varphi(x) dx
\right)\\
&=& \label{eqn:five}
\sum_{i=1}^3  \int_\Sigma 
\biggl( u_i \langle d_y A_\hbar (x,y),e_i(y) \rangle 
+ \langle d_x A_\hbar(x,y) , e_i(x) \rangle u_i \biggr) 
\varphi(x) dx
\end{eqnarray}
Line~\eqref{eqn:four} follows from the Leibniz property and the boundaryless condition of $\Sigma$.

As  $\Aa_\hbar ^\#$ is a set of smooth functions, its elements preserve the domain of $D_q$,
and  the integrand\\ $\Bigl( u_i \langle d_y A_\hbar (x,y),e_i(y) \rangle 
+ \langle d_x A_\hbar(x,y) , e_i(x) \rangle u_i  \Bigr)$
in Line~\eqref{eqn:five} is again a smooth integral kernel.
Thus,
$[D_q. \cdot]$ maps $\Aa_\hbar ^\#$ back to $\Aa_\hbar ^\#$. In particular, the commutator is bounded,
hence
$\left(\Aa_\hbar ^\#, L^2(\Sigma, \Cc^2), D_q\right)$ forms a spectral triple.

\section{Classical Limit}
In the previous section, we have described the $G$-connections as integral kernels, and
the tetrads as  Dirac type operators. We also briefly examined the interaction of these two kinds
of operators. Although the procedure of turning the connections and tetrads to operators are natural,
to claim that such procedure is in fact a quantization, one needs to show that their classical limit
of their quantum interaction, the commutator, gives back their classical interaction, the Poisson bracket.
By classical limit, we mean the parameter $\hbar$ is taken to zero.

By construction, if we take $\hbar$ to zero of an element $A_\hbar \in \Aa_\hbar$, 
one obtains a map $A_0 : T\Sigma \to \g$ that is linear for each $x\in \Sigma$.
It corresponds uniquely to a Lie algebra valued one form, hence a classical connection.
The way one obtains a classical quantity out of $D_q$ is first to multiply a  scaling $\hbar$ to  $D_q$,
and represent $\hbar D_q$ in terms of the integral kernel
\begin{eqnarray}
\label{eqn:intker}
k_\hbar (y,x)=\int _{T^*_x \Sigma} \hbar p_x(z) e^{2\pi i \langle y-x , z  \rangle}  dz ,
\end{eqnarray} where $p_x(z)$ is the symbol of $D_q$. 

Since the symbol $p_x(z)$ is degree one homogeneous, $\hbar p_x(z)$ equals to 
$p_x(\hbar z)$.
It is known from symbol calculus that the first order approximation of $k_\hbar (y,x)$ is given by the symbol
$p_x(z)$, which is a function from $T^*\Sigma$ to $\su(2)$.

We have obtained the follow heuristic picture:
\[
\begin{tabular}{rc|c}
\mbox{ } & \mbox{ Connections } & \mbox{ Metrics } \\ \cline{2-3} &&\\
\mbox{ Quantum: } & $A_\hbar:\Sigma \times \Sigma \to SU(2)$  & $\hbar D$ \\
&&\\
\mbox{ Classical: } & $A_0:T\Sigma \to \su(2) $ & $p: T^*\Sigma \to \su(2)$ \\
\end{tabular} 
\]
%

The symbol of $D_q$ is an $\su(2)$-valued smooth function on $T^*\Sigma$.
 $\Aa_0$ is the space of functions from the tangent bundle $T\Sigma $ to
the Lie algebra $\su(2)$. 
By composing elements of $\Aa_0$ to the inverse metric $q^{-1}$, we
end up with $\su(2)$-valued functions on $T^*\Sigma$ again.
Unfortunately, this formalism encounters a difficulty here, because we do not have a Poisson 
algebra structure on the space of $\su(2)$-valued functions on $T^*\Sigma$.
We are hoping that there is a parent Poisson algebra that contains the space of $\su(2)$-valued functions
on $T^*\Sigma$ here, such that it allows us to give a classical limit rigourously.

We end this collection of thoughts by remarking that
 ${\Aa_\hbar^\#}$ is a set of smooth functions, it preserves the domain 
of $D_q$, and the commutator $[D_q, \cdot ]$ maps ${\Aa_\hbar^\#}$ back to ${\Aa_\hbar^\#}$.
As such, the functional $\TR:{\Aa_\hbar^\#} \to \Cc$ is invariant under the action $[D_q,\cdot]: {\Aa_\hbar^\#} \to {\Aa_\hbar^\#}$.

\section{Outlook}
In Ashtekar's gravity theory, the fundamental variables are the connections and tetrads.
We turn connection 1-forms to linear maps from the tangent bundle to the Lie algebra, then exponentiate
them to integral kernels acting on a Hilbert space.
Also we build a   Dirac type operator acting on the same Hilbert space from tetrads.
Both of these procedures of turning the connections and tetrads to operators are natural;
the original connection space can be retrieved by taking $\hbar$ to zero, while the Dirac type operator
is known to capture the Riemannian metric from noncommutative geometry techniques.
However, to truly claim the procedure of turning those variables to operators to be a quantization, one needs a Poisson bracket that corresponds to the commutator $[\hbar D_q,A_\hbar]$ in a classical sense.
While $\lim _{\hbar \to 0} A_\hbar$ gives a classical connection by construction, and
 $\lim _{\hbar \to 0} \hbar D_q$
gives the symbol of $D_q$ via oscillatory integral. Both of these can be put into the space
$C^\infty(T^*\Sigma, \su(2))$. But unfortunately, $C^\infty(T^*\Sigma, \su(2))$ is not a Poisson algebra,
because of that we do not have a genuine classical limit that corresponds to our operatorial setup.

As the metric $q$ is used to construct the Dirac type operator and the Hilbert space, it seems natural to describe metrics as unbounded Fredholm modules over our algebra $\Aa_\hbar ^\#$. 
However, the product on $\Aa_\hbar ^\#$ (the convolution) is given by the composition of the convolution action on the
Hilbert space. Therefore, before a background metric is fixed, $\Aa_\hbar ^\#$ does not incorporate
an interesting product.
It is in fact more natural to claim that the Fredholm module is over the algebra generated by the gauge group,
$C^\infty(\Sigma, M_N(\Cc))$, so one gets a spectral triple 
$(C^\infty(\Sigma, M_N(\Cc)),L^2(\Sigma,\Cc^N),D_q)$; and the space of smooth  connections is identified as the space of trace
class operators.
At this stage, the setting appears very similar to the standard model of Chamseddine-Connes-Marcolli 
\cite{stdtriple}.

  
\end{document}